\documentclass[twocolumn,pra,nofootinbib]{revtex4}

\usepackage[titletoc,title]{appendix}
\usepackage[utf8]{inputenc}
\usepackage[T1]{fontenc}
\usepackage[english]{babel} 
\usepackage{amsmath}
\usepackage{amsfonts}
\usepackage{amssymb}
\usepackage{amsthm}
\usepackage{euscript}
\usepackage{url}
\usepackage{algorithm,algorithmic}   
\usepackage{tikz}

\newtheorem{theo}{Theorem}

\DeclareMathOperator{\amb}{amb}
\DeclareMathOperator{\bw}{bw}
\DeclareMathOperator{\lazy}{lazy}
\DeclareMathOperator{\target}{target}

\newcommand\ket[1]{|#1\rangle}

\begin{document}

\title{A lazy decoder to reduce decoding hardware requirements for quantum computing}
\title{Hierarchical decoding to reduce hardware requirements for quantum computing}

\makeatletter
\let\@fnsymbol\@arabic
\makeatother

\author{Nicolas Delfosse}
\affiliation{Microsoft Quantum and Microsoft Research, Redmond, WA 98052, USA}
%nidelfos@microsoft.com}

\email{nidelfos@microsoft.com}

\begin{abstract}
Extensive quantum error correction is necessary in order to scale quantum 
hardware to the regime of practical applications. 
As a result, a significant amount of decoding hardware is necessary to process the 
colossal amount of data required to constantly detect and correct errors occurring 
over the millions of physical qubits driving the computation.
The implementation of a recent highly optimized version of Shor's algorithm
to factor a 2,048-bits integer would require more 7 TBit/s of bandwidth for 
the sole purpose of quantum error correction and up to 20,000 decoding units.

To reduce the decoding hardware requirements, we propose a fault-tolerant 
quantum computing architecture based on surface codes with a cheap 
hard-decision decoder, the lazy decoder, combined with a sophisticated 
decoding unit that takes care of complex error configurations. 
Our design drops the decoding hardware requirements by several orders of 
magnitude assuming that good enough qubits are provided.
Given qubits and quantum gates with a physical error rate $p=10^{-4}$, 
the lazy decoder drops both the bandwidth requirements and the number 
of decoding units by a factor 50x.
% naive bw: 2.2 TBit/s
% naive unit: 20,000
% p = 10-4, pTarget = 10-15
% bw = 42,224,000,000 = 42 GBit/s
% unit = 377
Provided very good qubits with error rate $p=10^{-5}$, we obtain a 1,500x reduction
in bandwidth and decoding hardware thanks to the lazy decoder.
% naive bw: 800 GBit/s
% naive unit: 20,000
% p = 10-5, pTarget = 10-15
% bw = 520,000,000 = 520 MBit/s
% unit = 13

Finally, the lazy decoder can be used as a decoder accelerator. Our simulations show a 10x speed-up
of the Union-Find decoder and a 50x speed-up of the Minimum Weight Perfect Matching decoder.
\end{abstract}

\maketitle

Hundreds or thousands of high-quality qubits with an error rate of $10^{-10}$ or lower
are necessary to implement quantum algorithms with industrial applications.
In order to reach such high quality based on current quantum technology, logical qubits 
must be built from a large number of physical qubits and errors accumulating during the
computation must be corrected at regular intervals.

The family of surface codes 
\cite{dennis2002tqm, 
fowler2012surface_code, 
raussenfort2007ft, 
fowler2009highthreshold} 
is the most promising quantum error-correcting scheme to deal with current noise 
levels that barely reach $0.1\%$.
Using a distance $d$ surface code, a logical qubit is encoded into a square grid 
of $d \times d$ data qubits as Fig.~\ref{fig:sc5} shows.
Error correction is based on the measurement 
of $r = d^2 - 1$ {\em syndrome bits}, extracted using syndrome measurement 
circuits implemented on the plaquettes of the qubit grid as shown in Fig.~\ref{fig:sc5}.
The syndrome data is collected by the {\em readout device} and is 
sent to the {\em decoding unit}, which uses this information to detect and 
correct errors.
In order to avoid accumulation of errors during the computation, 
the syndrome is constantly measured, producing $r$ syndrome bits
for each syndrome measurement round.
In the present work, we consider a syndrome measurement time of 
$1 \mu s$, which is the time to implement the four rounds of 
CNOT gates and the final ancilla measurements of the syndrome measurement 
circuit.
Consider as an example the recent RSA factorization algorithm of 
\cite{gidney2019RSA}, 
which relies on distance-27 surface codes, encoding $K \approx 10,000$ 
logical qubits (ignoring distillation qubits).
The implementation of this algorithm requires a bandwidth of 7.3~TBit/s 
and two decoding units per logical qubit, 
%7,280,000,000,000 Bit/s = 7.3 TBit/s
that is 20,000 decoders, assuming independent correction of $X$-type and $Z$-type
Pauli errors.
%\footnote{One could consider 10,000 decoders if we treat simultaneously $X$-type and $Z$-type errors}.
It seems quite challenging to include such a formidable amount of decoding hardware
to ensure fault tolerance in a quantum computer.

\begin{figure}
\centering
\includegraphics[scale=.4]{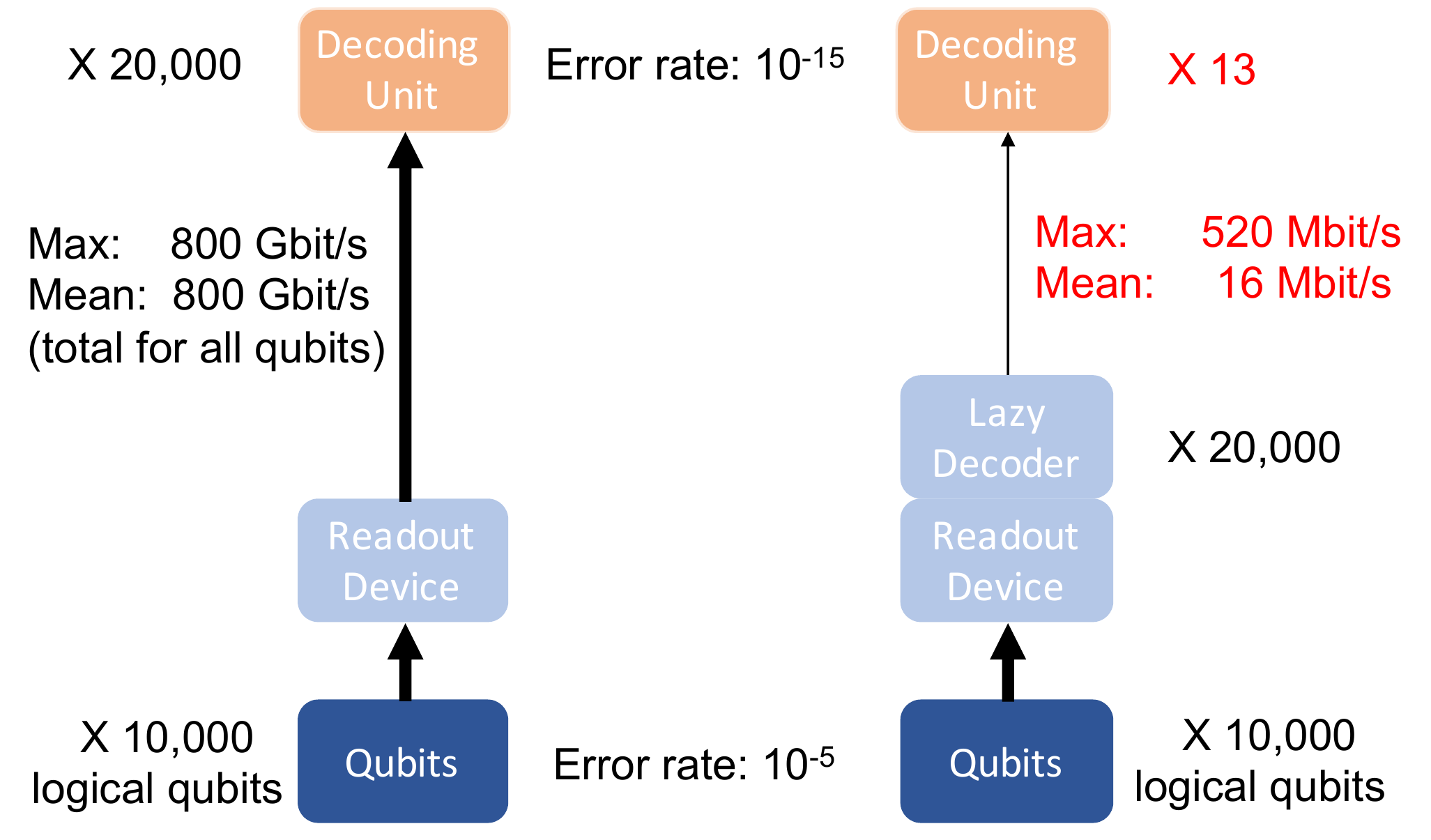}
\caption{Left: Standard design with a readout device sending all syndrome data to the decoding unit.
Two decoding units are used for each logical qubit, one for each type of error.
Right: The readout device is equipped with a lazy decoder unit capable of correcting 
a large number of easy fault configurations, avoiding to transmit syndrome data to the decoding
unit. 
We show the reduction in term bandwidth per logical qubit and number of decoding units obtained 
by introducing the lazy decoder in the case of physical error rate $p=10^{-5}$ 
with a logical error rate $p_L=10^{-15}$.
In this case, including lazy decoding units, saves 
$99.9\%$ of the bandwidth requirement and 
$99.9\%$ of the complex decoding units.
}
\label{fig:design}
\end{figure}

In this work, we propose an alternative to the naive design that allocates 
one decoding unit for each decoding task by introducing a simple hard 
decision decoder, that we refer to as the {\em lazy decoder}.
Fig.~\ref{fig:design} illustrates our design and the saving obtained
for a specific set of parameters.
The lazy decoder can be seen as a pre-decoder
which only attempt to correct easy error configurations.
If no obvious correction exists, it quits and returns a failure mode.
In that case, syndrome data is sent to 
a {\em decoding unit} that hosts a more sophisticated decoder 
achieving a good performance.
Many complex decoding algorithms can play this role 
\cite{dennis2002tqm,
fowler2012autotune,
fowler2012MWPM,
fowler2015parallel_MWPM,
duclos2010RG_decoder,
duclos2013RG_improved,
barrett2010loss_correction_short,
stace2010loss_correction_long,
delfosse2017peeling_decoder,
fowler2013XZ_correlations,
delfosse2014XZ_correlations,
criger2018XZ_correlations_and_degen,
nickerson2019correlation_adaptive_decoder,
dennis2005phd,
harrington2004phd,
wootton2012high_threshold_decoder,
hutter2014MCMC_decoder,
wootton2015simple_decoder,
herold2015FT_SC_decoder,
breuckmann2016local_decoder_2d_4d,
delfosse2017UF_decoder,
tomita2014LUT_decoder,
heim2016optimal_decoder,
torlai2017NN_SC_decoder,
baireuther2018NN_SC_correlations,
krastanov2017DNN_SC_decoder,
varsamopoulos2017NN_SC_decoding,
chamberland2018DNN_small_codes,
breuckmann2018NN_high_dimension_codes,
sweke2018RL_SC_decoder,
baireuther2019NN_CC_circuit_decoder,
ni2018NN_SC_large_d,
andreasson2019DNN_toric_code,
davaasuren2018CNN_SC_decoder,
liu2019NN_LDPC_codes,
varsamopoulos2018NN_SC_decoder_design,
varsamopoulos2019NN_SC_decoder_distributed,
wagner2019NN_training_with_symmetry,
chinni2019NN_CC_decoder,
sheth2019NN_SC_combining_decoder,
colomer2019RL_TC_decoder,
ferris2014TN_QEC,
bravyi2014MLD,
darmawan2018Lineartime_TN_decoding,
tuckett2018TN_biased}

The lazy decoder is designed to be as simple as possible. 
It consists of a single loop over syndrome bits, which makes it 
an ideal candidate for a low-level hardware implementation 
with FPGA or CMOS and it is also easy to parallelize.
We picture the lazy decoder as a hardware unit, as close as
possible to the readout device. 
In the worst case, we need two lazy decoder per logical qubit 
but given the speed of this module, one can expect sharing 
this unit between many of qubits.
We assume that the measurement device (and therefore the lazy decoder) 
are placed in the proximity of the qubits in order to avoid 
long feedback loops increasing the physical qubit clock cycle.

The decoding unit, which is significantly more complex than the lazy decoder, 
may be challenging to implement close to the qubits without introducing 
additional noise to the quantum plane. 
Therefore, we consider a decoding unit placed at further distance 
from the qubits, which leads to latency issues, justifying our focus
on the bandwidth of the readout-decoding unit link. We ignore
the bandwidth between the readout-device and the adjacent lazy
decoder.

Introducing the lazy decoder reduces the bandwidth required to send 
syndrome data from the readout device to the decoding unit because 
in most cases the nearby lazy decoder takes care of the correction.
Moreover, the number of decoding units required is significantly reduced
if one can rely on the lazy decoder a large fraction of the time.
In what follows, we prove that this design leads to a reduction of 
the decoding hardware of several orders of magnitudes for good enough
qubits. 
%In order to benefit from the lazy decoder, one must reach 
error rates below the standard assumption of $10^{-3}$.

\section{The surface code}

%Let us first recall the basic properties of surface codes.
The surface code encodes a logical qubit into a square grid of $d \times d$ 
{\em data qubits} where $d$ is the minimum distance of the code, 
as show in Fig.~\ref{fig:sc5}.
Error correction relies on the syndrome measurement circuits 
represented in Fig.~\ref{fig:sc5}, consuming an additional $d^2 - 1$ qubits.

All plaquettes are measured simultaneously at regular intervals, producing rounds of 
syndrome data for the decoder which identifies errors based on this information. 
Fig.~\ref{fig:sc5} shows the schedule used for the sequence of CNOT 
gates in order to allow for a parallel implementation and to preserve the code distance
despite the propagation of errors by CNOT gates.

We simulate the surface code and the syndrome extraction circuit with a
{\em circuit level noise} that represents imperfections on all qubits, gates, measurements,
and waiting steps, by injecting random Pauli faults between any two steps of the circuit.
A single qubit fault is included after each state preparation or waiting qubit 
with probability $p$. The Pauli fault is chosen uniformly in the set $\{X, Y, Z\}$. 
The outcome of a single qubit measurement is flipped with probability $2p/3$.
The CNOT noise is modeled by a two-qubit Pauli fault injected after the CNOT
with probability $p$.
The fault is selected uniformly between the 15 non-trivial two-qubit Pauli operators
acting on the support of the CNOT gate. 

\begin{figure}
\centering
\includegraphics[scale=.5]{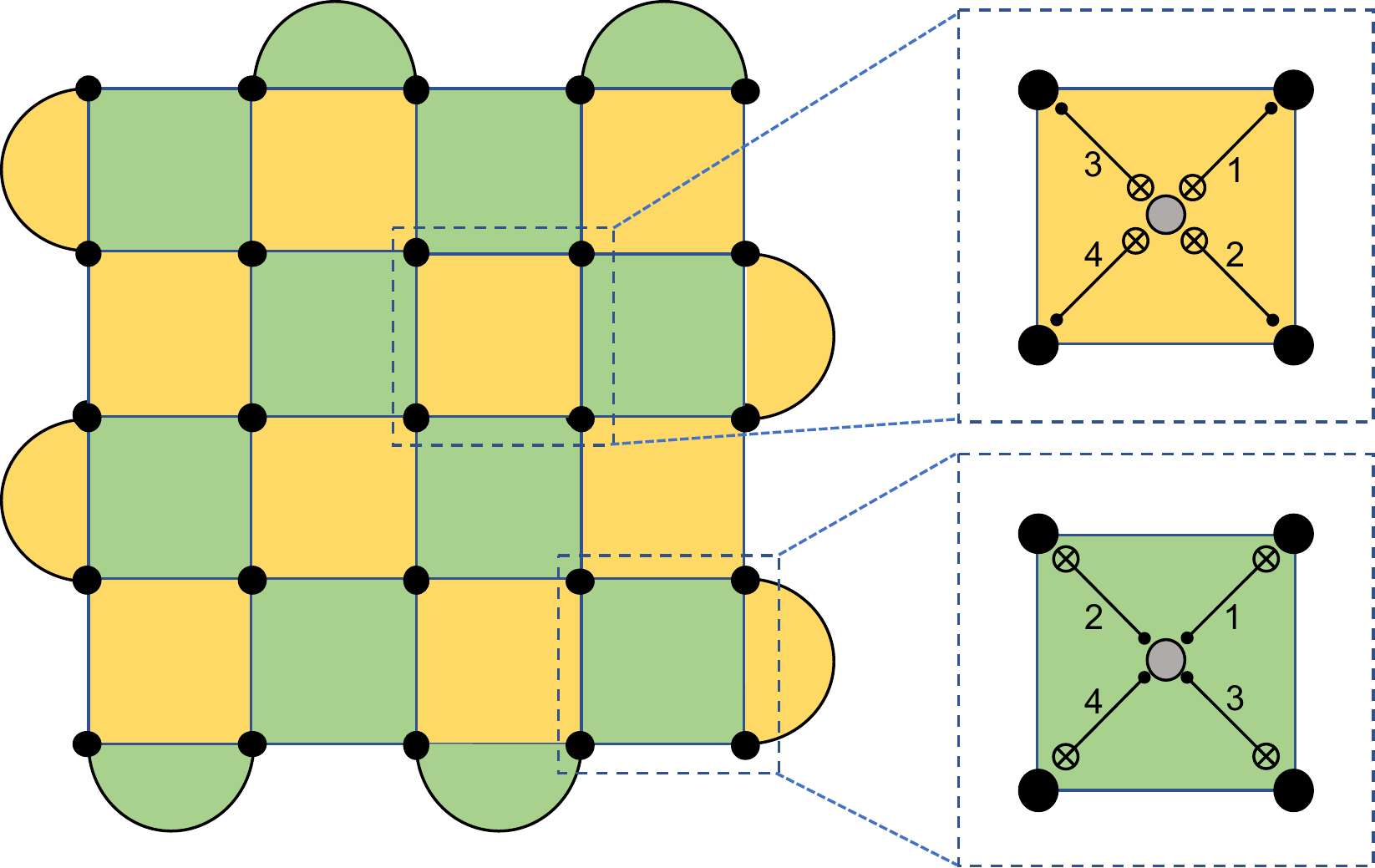}
\caption{Distance five surface code with 25 (black) data qubits encoding 1 logical qubit.
The measurement circuit on green and yellow plaquettes consume a (grey) ancilla qubit 
per plaquette.
A syndrome extraction round is performed with three steps implemented simultaneously over all 
all the plaquettes:
(i) Prepare the ancilla in the state $\ket +$ (green) or $\ket 0$ (yellow).
(ii) For each plaquette apply the four CNOT gates in the order prescribed by their indices,
(iii) Measure the ancilla in the $Z$-basis (green) or in the $X$-basis (yellow).
The final measurement produces a syndrome bit for each plaquette.
Boundary plaquettes implement the restriction of the plaquette measurement 
circuit to two qubits.
}
\label{fig:sc5}
\end{figure}

Equipped with qubits and quantum gates affected by a circuit level noise 
with rate $p$, the surface code encoding provides 
a logical qubit whose error rate drops to \cite{fowler2012surface_code}
\begin{equation} \label{eqn:sc_logical_error_rate}
p_{L}(p, d) = 0.1 (100 p)^{(d+1)/2} \cdot
\end{equation}
where $d$ is the surface code minimum distance.
One can use this heuristic formula to estimate the minimum distance 
$d$ required in order to achieve a given target logical error rate.
Most practical applications necessitate a logical error rate that varies between
$10^{-10}$ and $10^{-15}$, which is out of reach on current hardware
without error correction.

\begin{figure}
\centering
\includegraphics[scale=.3]{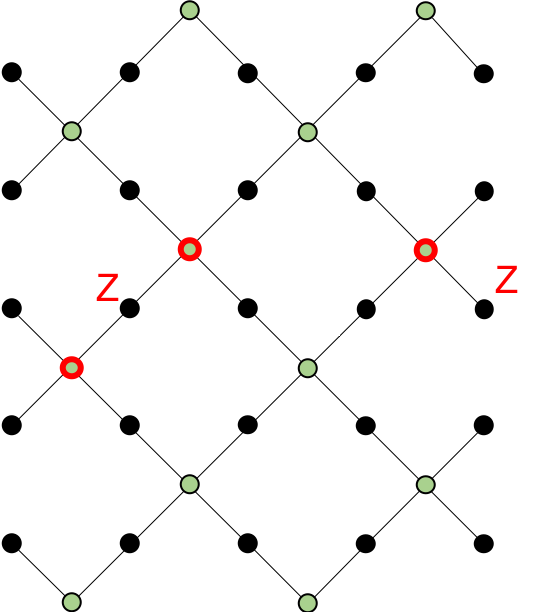}
%\hspace{.5cm}
%\includegraphics[scale=.4]{fig/sc5_error2}
\caption{A slice of the decoding graph obtained by placing a vertex 
in the center of each green plaquette and connecting incident plaquettes
with an edge. A $Z$-fault is detected by either one or two plaquettes as
shown with marked vertices.
}
\label{fig:sc5_error}
\end{figure}

After producing an encoded surface code state, we start measuring syndrome 
data.
Non-trivial syndrome values indicate the presence of a fault.
For simplicity, we focus on $Z$-type faults, detected by 
the measurement of green plaquettes as in Fig~\ref{fig:sc5_error}.
$X$-type faults can be treated similarly.
In what follows, we describe a graph that represents all possible faults
in the syndrome measurement circuit. The whole simulation of the 
syndrome extraction circuit can be implemented based on this graph.

Consider the space-time locations $(x, y, t)$ of syndrome bits, 
where $(x, y)$ is the coordinates of the center of a plaquette and 
$t = 0, 1, 2, \dots$ is the index of the syndrome round.
Let $s(x, y, t) = 0$ or $1$ be the syndrome value in location $(x, y, t)$.
We record the changes of syndrome values, that is 
$\bar s(x, y, t) = s(x, y, t) -  s(x, y, t-1) \pmod 2$, 
and setting $\bar s(x, y, 0) = 0$ for the first round.
A fault in the measurement circuit is detected by a set of 
syndrome locations $(x, y, t)$ where the syndrome value changes, 
{\em i.e.} $s(x, y, t) \neq 0$
A non-trivial fault is detected either in one location $u$ or in a pair of 
locations $u, v$, leading a natural graph structure.

The {\em decoding graph} represents all possible faults in the syndrome 
extraction circuit.
The vertex set of the decoding graph is the set of syndrome locations. 
A half-edge $\{u, -\}$ or an edge $\{u, v\}$ is built from each potential 
fault in the syndrome extraction circuit.
The decoding graph is a 3D cubic lattice with additional diagonal edges.
Fig~\ref{fig:sc5_error} shows a horizontal slice of the decoding graph.
For each edge, we also store the probability of all circuit faults that map onto 
that edge. This edge weight can be processed by the decoder to provide a 
more accurate correction.

A surface code decoder takes as an input a set of consecutive rounds 
of syndrome data given by $\bar s$ and it aims at identifying the 
residual error on the $d^2$ data qubits.
A standard decoding strategy consists in identifying a minimum set of edges 
that matches the observed syndrome $\bar s$.
%A number of decoding algorithms have been designed in order to solve this
%problem efficiently.
%Unfortunately their complexity makes a cold implementation quite challenging.

\section{Lazy decoder}

To simplify, we correct separately $X$-type and $Z$-type faults. 
%We assume that the
%decoding unit accumulates $d$ rounds of syndrome data before providing an estimation of 
%the correction.
Our objective is not to design a good decoder but to identify a set of fault 
configurations that is both very likely and easy to correct. 
The lazy decoder will correct exclusively this subset of easy configurations.
%When the lazy decoder succeeds, not other decoding unit is necessary
%shrinking the decoding hardware needs.

Let us first describe the most naive version of the lazy decoder.
We simply check whether the syndrome is trivial and if so we send no data 
to the decoding unit. Clearly, it is easy to implement with low hardware costs. 
However, this does not help to reduce decoding hardware requirements
since the probability to observed a trivial syndrome is generally too small.
%For a distance-27 surface code with a physical error rate $p=10^{-3}$,
%as in the factorization algorithm cited above, the average bandwidth used is 
%only divided by two with this naive strategy.
One could design a decoder that corrects any single fault or any fault of weight two, 
three and so on, but our numerical experiments show that 
either these sets of faults are still not likely enough for reasonable distances 
or the lazy decoder design becomes just as complex as the design of 
the whole decoding unit, defeating the whole purpose of the lazy decoder.

Algorithm~\ref{algo:lazy_decoder} proposes a satisfying version of the lazy decoder 
for our architecture.
We will prove that, when it succeeds, it returns a minimum set of faults explaining 
the syndrome observed. 
Our basic idea is to correct all configurations that can be corrected locally.
This also guarantees a high potential for parallelism.
If faults are sufficiently separated from each-other in space and time, we can obtain 
such a globally minimal solution from obvious locally optimal decisions. 

The syndrome is given as an input of Algorithm~\ref{algo:lazy_decoder}
as a set of vertices $\bar s \subset V$ in the decoding graph induced by 
$Z$-type faults and the algorithm returns either an estimation 
${\cal E} \subset E$ of the set of edges supporting faults, or a failure mode.
The first block of Algorithm~\ref{algo:lazy_decoder} looks for edges that 
match two neighbors syndrome bits $u, v \in \bf s$. 
Such an edge is locally optimal (it is the minimum number of faults 
explaining two non-trivial syndrome nodes) 
and can be safely added to the correction $\cal E$.
The second block of Algorithm~\ref{algo:lazy_decoder} takes care of 
remaining unmatched syndrome vertices, that come from faults on half-edges.
The half-edge $\{u, -\}$ is a locally optimal choice to explain the non-trivial syndrome 
$\bar s(u) = 1$ only if $u$ has no neighbor $v$ supporting a non-trivial syndrome value.
Otherwise, the choice of $\{u, -\}$ is said to be {\em ambiguous}.
We use the notation $N_v$ for the set of neighbors of a vertex $v$.
In order to guarantee a globally optimal	solution, we count the number of ambiguous
choices $N_{\amb}$ and we return {\bf failure} if at least two ambiguous half-edges are
present in $\cal E$.
Theorem~\ref{theo:lazy_dec_optimality} proves the optimality of our strategy.

\begin{algorithm}
\caption{Lazy decoder}
\label{algo:lazy_decoder}

\begin{algorithmic}[1]
\REQUIRE A syndrome set $\bar s \subset V$.
\ENSURE Either a fault set ${\cal E} \subset E$ such that $\bar s({\cal E}) = \bar s$
or {\bf failure}.
\STATE Set $s' = s$, $\cal E = \emptyset$ and $N_{\amb} = 0$.
\STATE Run over all edges $e = \{u, v\} \in E$ and do:
\STATE	\hspace{.5cm}  If $u \in \bar s'$ and $v \in \bar s'$:
\STATE 		\hspace{1cm} Add $e$ to $\cal E$ and remove $u$ and $v$ from $\bar s'$.
\STATE Run over all half-edges $e = \{u, -\} \in E$ and do:
\STATE 	\hspace{.5cm} If $u \in \bar s'$ do:
\STATE 		\hspace{1cm} Add $e$ to $\cal E$ and remove $u$ from $\bar s'$.
\STATE		\hspace{1cm} If $N_v \cap \bar s \neq \emptyset$, increment $N_{\amb}$
\STATE			\hspace{1.5cm} If $N_{\amb} > 1$ return {\bf failure}.
\STATE If $s' \neq \emptyset$ return {\bf failure}
\STATE Return $\cal E$.
\end{algorithmic}
\end{algorithm}

\begin{theo} \label{theo:lazy_dec_optimality}
If the lazy decoder succeeds it returns a minimum set of faults for the syndrome $\bar s$.
\end{theo}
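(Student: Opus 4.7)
The plan is to show that any valid correction ${\cal F}$ (i.e., with $\bar s({\cal F}) = \bar s$) satisfies $|{\cal F}| \ge |{\cal E}|$, using a symmetric-difference argument. I would begin by recording a structural property of the output: $\deg_{\cal E}(v) = 1$ for every $v \in \bar s$ and $\deg_{\cal E}(v) = 0$ otherwise. This follows directly from the algorithm: the first block produces vertex-disjoint edges among $\bar s$-vertices (each endpoint is removed from $\bar s'$ once chosen), the second block only places a half-edge at a vertex still in $\bar s'$, and no element of ${\cal E}$ touches a vertex outside $\bar s$. In particular, ${\cal E}$ has no two elements sharing a non-boundary vertex.

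Next, I would set ${\cal D} = {\cal E} \triangle {\cal F}$. Since ${\cal E}$ and ${\cal F}$ have the same syndrome, ${\cal D}$ has trivial syndrome, so every $v \in V$ has even degree in ${\cal D}$. Treating half-edges as edges to an auxiliary boundary vertex $b$, ${\cal D}$ decomposes into edge-disjoint simple cycles (avoiding $b$) and simple $b$-to-$b$ paths. I would then prove the component-wise inequality $|W \cap {\cal E}| \le |W \cap {\cal F}|$ for each such component $W$. The key observation is that two consecutive edges of $W$ meeting at a vertex $v \in V$ cannot both belong to ${\cal E}$, as that would force $\deg_{\cal E}(v) \ge 2$. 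Hence on a simple cycle of length $\ell$ the ${\cal E}$-edges are pairwise non-adjacent, giving $|W \cap {\cal E}| \le \lfloor \ell/2 \rfloor \le |W \cap {\cal F}|$; the same bound applies to every $b$-to-$b$ path of even length.

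The delicate case, and the main obstacle, is a $b$-to-$b$ path of odd length $k$ in which ${\cal E}$ attains the extremal count $(k+1)/2$ by occupying positions $e_1, e_3, \dots, e_k$. There $e_1$ and $e_k$ must be ${\cal E}$-half-edges at some $v_1, v_{k-1}$, while $e_3, e_5, \dots, e_{k-2}$ are edges added by the first block. This forces all of $v_1, v_2, \dots, v_{k-1}$ into $\bar s$, and in particular $v_2 \in N_{v_1} \cap \bar s$ and $v_{k-2} \in N_{v_{k-1}} \cap \bar s$, so both half-edges are ambiguous in the sense of the algorithm. This would require $N_{\amb} \ge 2$, contradicting the success of the lazy decoder. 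The constraint $N_{\amb} \le 1$ therefore rules out exactly this pathological configuration, and the component-wise inequality holds on every $W$. Summing over components yields $|{\cal E} \cap {\cal D}| \le |{\cal F} \cap {\cal D}|$, hence $|{\cal E}| \le |{\cal F}|$, which is the desired minimality.
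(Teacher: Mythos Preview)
Your argument is correct and takes a genuinely different route from the paper. The paper proves optimality by sandwiching: it derives a lower bound $|{\cal F}| \ge (|\bar s| - |\partial\bar s^*|)/2 + |\partial\bar s^*|$ valid for every fault set ${\cal F}$ with syndrome $\bar s$ (by arguing that each vertex of $\partial\bar s^*$ must be reached via a half-edge or a path of length $\ge 2$), and then shows that on success the algorithm's output satisfies the matching upper bound $|{\cal E}| \le (|\bar s| - |\partial\bar s^*|)/2 + |\partial\bar s^*| + 1/2$, with integrality closing the gap. You instead compare ${\cal E}$ directly to an arbitrary competitor ${\cal F}$ via the symmetric difference ${\cal D} = {\cal E}\triangle{\cal F}$, decompose ${\cal D}$ into cycles and boundary-to-boundary paths, and use the degree-one structure of ${\cal E}$ to bound $|W\cap{\cal E}|$ on each piece. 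The only potentially losing pieces are odd $b$--$b$ paths on which ${\cal E}$ occupies the alternating pattern $e_1,e_3,\dots,e_k$; you correctly observe that both terminal half-edges are then ambiguous, forcing $N_{\amb}\ge 2$ and hence failure. Your approach makes more transparent \emph{why} the $N_{\amb}\le 1$ cutoff is exactly the right one (a single bad path already produces two ambiguous half-edges), whereas the paper's counting argument gets there via the $+1/2$ slack and integrality. The paper's proof is shorter and more self-contained; yours is a cleaner combinatorial explanation that would also localize any future refinement (e.g.\ allowing $N_{\amb}\le c$ and bounding the suboptimality by $c$).
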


\begin{proof}
If $\bar s({\cal E}) = \bar s$, the set $\cal E$ contains a set of paths that 
connects vertices of $\bar s$ either by pairs or to the boundary.
Naively, we have 
$
|{\cal E}| \geq |\bar s|/2
$,  
with equality if and only if $\cal E$ pairs each vertex of $\bar s$ with one of its neighbors.
%In the case of an isolated vertex in the bulk of the graph, Algorithm~\ref{algo:lazy_decoder}
%returns {\bf failure}.

Let $\partial \bar s$ be the set of vertices $v$ of $\bar s$ indicent to
a half-edge $\{v, -\}$.
Consider the subset $\partial \bar s^* \subset \partial \bar s$ of vertices 
$v$ that have no neighbor in $\bar s$ (that is $N_v \cap \bar s = \emptyset$). 
For an arbitrary fault set ${\cal E} \subset E$, any vertex of $\partial \bar s^*$ is either part of a 
half-edge or it is connected to a vertex at distance $\geq 2$, leading to
the bound
\begin{equation} \label{eqn:proof_theo_lower_bound}
|{\cal E}| \geq ( |\bar s| - |\partial \bar s^*| ) /2 + |\partial \bar s^*| \cdot
\end{equation}
This equation is satisfied for all fault sets $\cal E$ with syndrome $\bar s$.

Consider now the fault set $\cal E$ produced by Algorithm~\ref{algo:lazy_decoder}
in case of success.
The first block finds edges that match bulk vertices 
$v \in \bar s \backslash \partial \bar s$ to a neighbor.
Vertices of $\partial \bar s^*$ are linked to a boundary by a half-edge in $\cal E$.
Finally, the vertices $v \in \partial \bar s \backslash \partial \bar s^*$
are all matched to a neighboring vertex except at most one 
(because in case of success we have $N_{\amb} \leq 1$).

This proves that the set $\cal E$, returned by Algorithm~\ref{algo:lazy_decoder}
in case of success, satisfies
\begin{equation} \label{eqn:proof_theo_upper_bound}
|{\cal E}| \leq ( |\bar s| - |\partial \bar s^*| ) /2 + |\partial \bar s^*| + 1/2 \cdot
\end{equation}
Together with the lower bound \eqref{eqn:proof_theo_lower_bound},
this demonstrates that the size of $\cal E$ is minimum.
\end{proof}

One can perform the lazy decoding on the fly while reading the syndrome rounds.
It is enough to store three consecutive rounds of syndrome values to apply the
lazy decoder. 
When a failure of the lazy decoder is detected, we start sending syndrome information 
to the decoding unit which accumulates $d$ rounds of data to provide
a correction. This leads to an asynchronous decoding between different logical blocks,
that can be advantageous to share decoding hardware between logical qubits 
but that could induce stalling in the layout of logical operations. 
We do not explore the consequences of this asynchronous decoding in the current work.
The locality of Algorithm~\ref{algo:lazy_decoder} suggests an easy parallel 
implementation. Only the value $N_{\amb}$ is a global data.

\section{Bandwidth reduction}

Without the lazy decoder, the bandwidth used per logical qubit is 
$
\bw(d) = (d^2 - 1) / \tau
$
bits, where $d$ is the code distance and $\tau$ is the time required 
per syndrome extraction round in seconds.
All the numerical results of this article are obtained assuming $\tau = 1 \mu s$.

The readout-decoding unit bandwidth used for a logical qubit drops 
to zero while the lazy decoder succeeds. 
This induces a significant reduction of the average bandwidth 
used per logical qubit, as we can see in Fig.~\ref{fig:bandwidth_d5_d15_d25}.
With physical error rate $p=10^{-4}$, the average bandwidth saving 
varies between 1 order of magnitude for the distance-35 surface code to more
than 3 orders of magnitude for distance $d=5$.

\begin{figure}
\centering
\includegraphics[scale=.5]{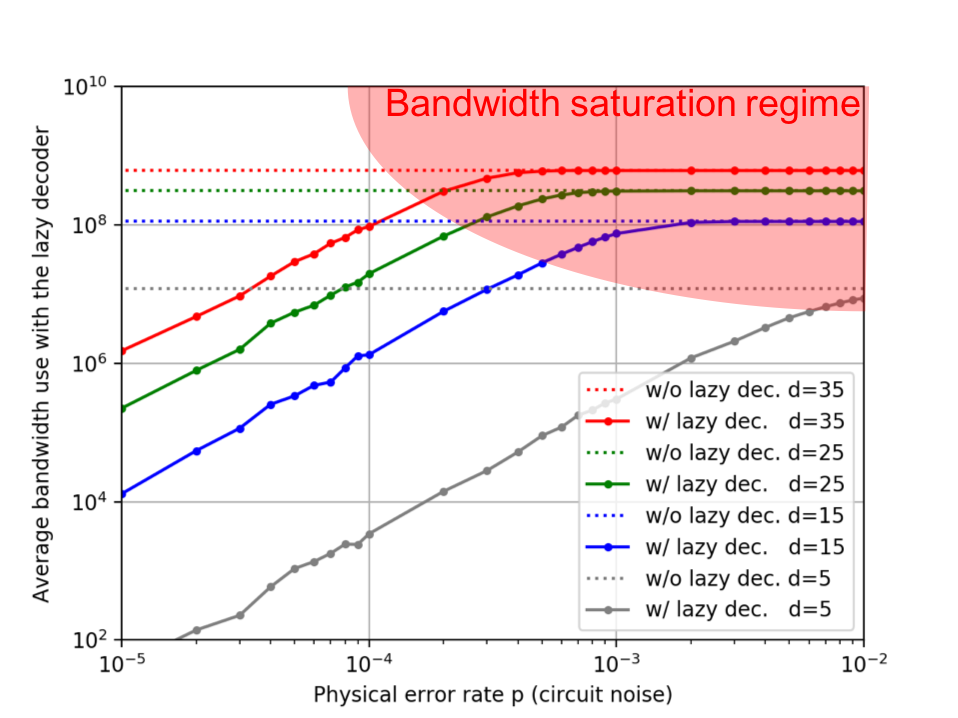}
\caption{Average bandwidth used per logical qubit in Bit/s 
with the lazy decoder for surface codes with distance $d=5, 15, 25, 35$.
Dashed (continuous) lines represent the average bandwidth used per logical qubit without (with)
lazy decoder.
The lazy decoder does not help in reducing the bandwidth in the bandwidth saturation
regime.
}
\label{fig:bandwidth_d5_d15_d25}
\end{figure}

We observed a phenomenon of {\em bandwidth saturation}
which occurs when using a large-distance code with a qubit quality 
that is not far enough below the threshold, {\em e.g.}
$d \geq 15$ with $p=10^{-3}$.
In this regime, the lazy decoder almost constantly fails and 
we do not observe any reduction of the bandwidth. Then, it
may be preferable to remove the lazy decoder to avoid
hurting the decoder's performance by additional latency.
This suggests that it is necessary to keep improving qubit quality 
far below the surface code threshold in order to scale up quantum 
hardware and its classical control to reach the regime of practical 
applications.

\section{Bandwidth requirements}

We observed a neat reduction of the average bandwidth use
using the lazy decoder. 
However, the bandwidth utilization varies with time and
the system often requires much more bandwidth than the 
average use. The required bandwidth and the number of decoding
unit needed depends on the maximum number of failures of the 
lazy decoder over the $K$ logical qubits of the quantum computer.
%Consider a quantum computer with $K$ logical qubits, each
%encoded with a distance-$d$ surface code, using quantum 
%hardware with physical error rate $p$.

To simplify, we consider a single communication channel connecting the 
readout devices of all logical qubits to the decoding units.
A {\em bandwidth failure} occurs if at given point in time the 
bandwidth needs for the whole system surpass the bandwidth 
of the readout-decoder channel.

%Similarly, a {\em decoding unit failure} occurs if we do not have
%enough to decoding units to allocate one decoder for each 
%logical qubit where the lazy decoder fails.

The {\em bandwidth required} is defined to be the minimum bandwidth 
such that the probability of bandwidth failure is smaller than $p_{L}$,
which guarantees that the bandwidth bottleneck is not the dominant 
source of system failure (as suggested in \cite{das2020:UF_architecture}).
To obtain the bandwidth required, consider the failure probability
$p_{\bf fail} = p_{\bf fail}(p, d)$ for the lazy decoder
over $d$ consecutive rounds of syndrome measurement for a single 
logical qubit.
We assume that the noise on different logical qubits is independent,
so that the probability of at least $m$ failures of the lazy decoder over the 
$K$ logical qubits is given by 
$
\binom {2K} m p_{\bf fail}^m.
$
The bandwidth required for the whole system of $K$ logical qubits 
is given by 
\begin{equation} \label{eqn:bw_req}
\bw_{\lazy}(p, d, K) =  M(p, d, K) \bw(d)
\end{equation}
where $M = M(p, d, K)$ is the smallest integer such that  
\begin{equation} \label{eqn:n_lazy_fail}
\binom {2K} {M + 1 } p_{\bf fail}(p, d)^{ M + 1 } < p_L(p, d) 
\end{equation}
which ensures that a bandwidth failure occurs with probability
at most $p_{L}$.
It may be challenging to evaluate numerically $M(p, d, K)$ based in Eq.~\eqref{eqn:n_lazy_fail}
for large values of $K$. The numerical results presented below 
rely on Chernoff bound to derive an upper bound on $M(p, d, K)$.

\begin{figure}
\centering
\includegraphics[scale=.6]{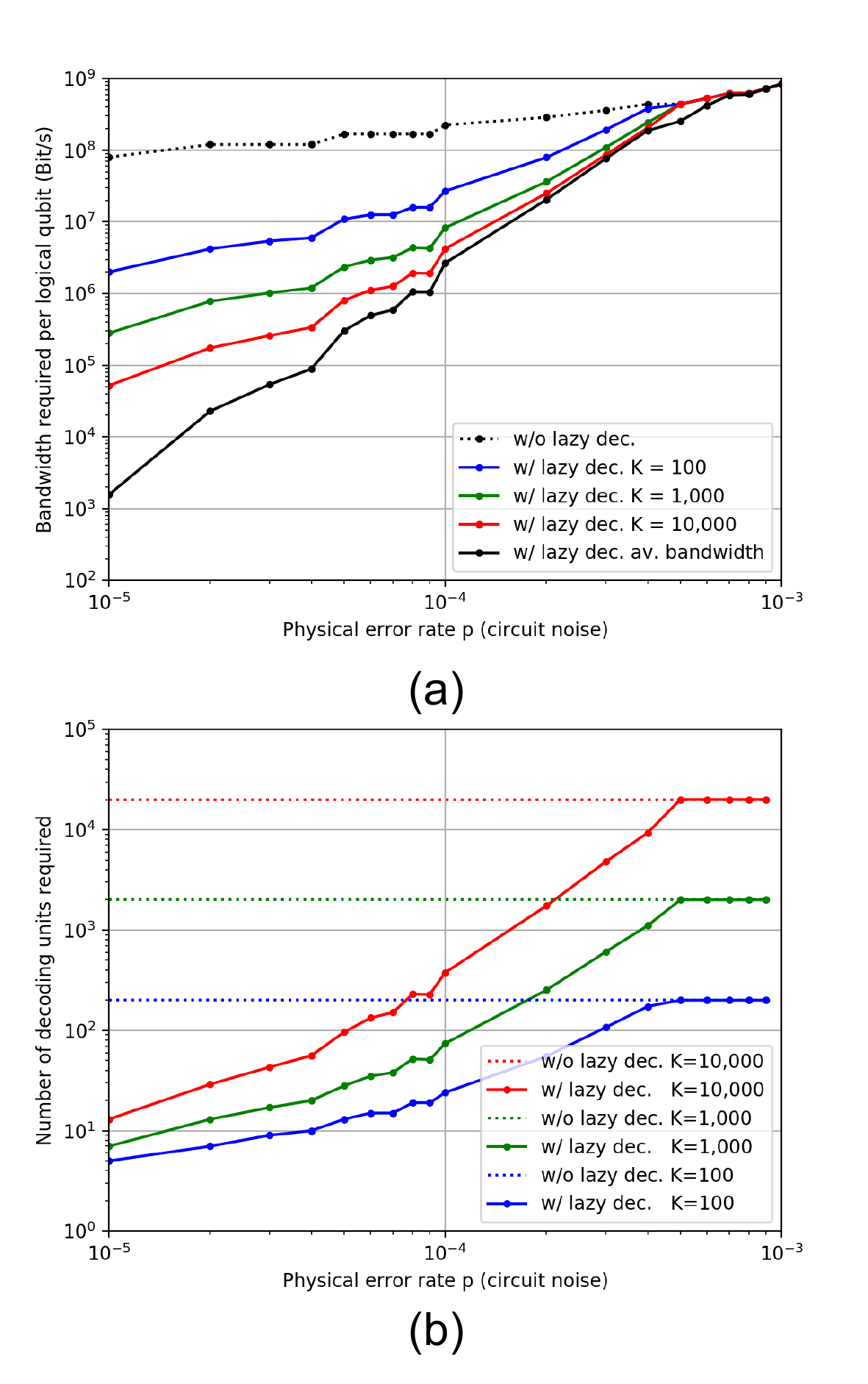}
\caption{
(a)
Bandwidth required per logical qubit in Bit/s for different 
number of logical qubits $K$ to reach a target logical 
error rate of $10^{-15}$.
(b)
Number of decoding units required for a system of $K$ 
logical qubits in order to reach a target logical error rate 
of $10^{-15}$. 
}
\label{fig:bandwidth_req}
\end{figure}

Fig.~\ref{fig:bandwidth_req} shows the bandwidth required to
reach a target logical error rate $p_{\target} = 10^{-15}$.
Given $p_{\target}$ and the physical error rate $p$ of the device, 
we first pick the smallest minimum distance $d$ that ensures
$p_{L}(p, d) < p_{\target}$ using Eq.~\ref{eqn:sc_logical_error_rate}.
The minimum distance varies discretely with $p$, inducing brutal jumps 
in the system requirements.
Once the distance is fixed, we estimate the lazy decoder failure 
probability $p_{\bf fail}(p, d)$ by a Monte-Carlo simulation, from
which we derive the value of $\bw_{\lazy}(p, d, K)$ based on 
Eq.~\eqref{eqn:bw_req}.
A better distribution of resources is achieved for a system that
contains many logical qubits, dropping the bandwidth required per logical qubits 
closer to the average use.
The bandwidth saturation appears again in the regime $p = 10^{-3}$
where we require almost 1GBits/s per logical qubit. 
For error rates $p \geq 6.10^{-4}$, we observe no saving for bandwidth 
requirements.

\section{Decoding hardware requirements}

In addition to a substantial bandwidth reduction, the lazy decoder 
induces savings in the decoding hardware. 
Indeed, the value $M(p, d, K)$ introduced in Eq.~\eqref{eqn:bw_req}
is the largest number of decoding tasks to perform simultaneously 
over the whole system of $K$ logical qubits.
Instead of allocating one decoding unit for each logical qubit, one
can share $M(p, d, K)$ decoding units without notably affecting the 
failure rate of the quantum computer.
Fig.~\ref{fig:bandwidth_req}, shows the saving in term of number 
of decoding units.
In order to reach a target error rate of $10^{-15}$ 
with a system of $K = 10,000$ logical qubits with physical error rate 
$p=10^{-4}$ (resp. $10^{-5}$)
a naive design uses $2K = 20,000$ decoding units while only 377 units 
(resp. 13) are sufficient with the lazy decoder, 
saving $98\%$ (resp. $99.9\%$) 
of the decoding hardware.
Table~\ref{tab:hw_req} shows the saving and the hardware requirements
for different target noise rate, qubit quality and system size.
Again, the saturation in the regime $p=10^{-3}$ limits the saving. 
We need better qubits in order to scale up quantum computers to
the massive size required for practical applications.

\begin{table}[htb]
\begin{center}
\begin{footnotesize}
\caption{Decoding Hardware Requirements with Lazy Decoder for different system sizes.
We indicate the fraction of decoding hardware saved by the Lazy decoder.}
\setlength{\tabcolsep}{1.2mm} 
\renewcommand{\arraystretch}{1.0}
\label{tab:hw_req}
\begin{tabular}{ |l|l|l|l| } 
\hline
	\multicolumn{4}{|c|}{ $p_{target} = 10^{-15}$ } \\
\hline
\hline	
	$p$
	& $K = 100$
	& $K = 1,000$
	& $K = 10,000$
	\\
\hline
 	& $d = 29$
	& $d = 29$
	& $d = 29$
	\\
	$10^{-3}$
 	& 84 GBit/s
 	& 840 GBit/s
 	& 8.4 TBit/s
	\\
 	& 200 dec. units
 	& 2,000 dec. units
 	& 20,000 dec. units
	\\
 	& save $0\%$
	& save $0\%$
	& save $0\%$
	\\
\hline
 	& $d = 15$
	& $d = 15$
	& $d = 15$
	\\
	$10^{-4}$
 	& 2,7 GBit/s 
	& 8,3 GBit/s 
	& 42 GBit/s
	\\
 	& 24 dec. units
 	& 74  dec. units
 	& 377 dec. units
	\\
 	& save $88\%$
	& save $96\%$
	& save $98\%$
	\\
\hline
 	& $d = 9$
	& $d = 9$
	& $d = 9$
	\\
	$10^{-5}$
 	& 200 MBit/s
	& 280 MBit/s
	& 520 MBit/s
	\\
 	& 5 dec. units
	& 7 dec. units 
	& 13 dec. units
	\\
 	& save $97.5\%$
	& save $99.7\%$
	& save $99.9\%$
	\\
\hline
\hline
	\multicolumn{4}{|c|}{ $p_{target} = 10^{-12}$ } \\
\hline
\hline	
	$p$
	& $K = 100$
	& $K = 1,000$
	& $K = 10,000$
	\\
\hline
 	& $d = 23$
	& $d = 23$
	& $d = 23$
	\\
	$10^{-3}$
 	& 53 GBit/s
 	& 530 GBit/s
 	& 5.3 TBit/s
	\\
 	& 200 dec. units
 	& 2,000 dec. units
 	& 20,000 dec. units
	\\
 	& save $0\%$
	& save $0\%$
	& save $0\%$
	\\
\hline
 	& $d = 11$
	& $d = 11$
	& $d = 11$
	\\
	$10^{-4}$
 	& 900 MBit/s 
	& 2,4 GBit/s 
	& 10.5 GBit/s
	\\
 	& 15 dec. units
 	& 40  dec. units
 	& 175 dec. units
	\\
 	& save $93\%$
	& save $98\%$
	& save $99\%$
	\\
\hline
 	& $d = 7$
	& $d = 7$
	& $d = 7$
	\\
	$10^{-5}$
 	& 96 MBit/s
	& 144 MBit/s
	& 216 MBit/s
	\\
 	& 4 dec. units
	& 6 dec. units 
	& 9 dec. units
	\\
 	& save $98\%$
	& save $99.7\%$
	& save $99.96\%$
	\\
\hline
\hline
	\multicolumn{4}{|c|}{ $p_{target} = 10^{-9}$ } \\
\hline
\hline	
	$p$
	& $K = 100$
	& $K = 1,000$
	& $K = 10,000$
	\\
\hline
 	& $d = 17$
	& $d = 17$
	& $d = 17$
	\\
	$10^{-3}$
 	& 29 GBit/s
 	& 290 GBit/s
 	& 2.9 TBit/s
	\\
 	& 200 dec. units
 	& 2,000 dec. units
 	& 20,000 dec. units
	\\
 	& save $0\%$
	& save $0\%$
	& save $0\%$
	\\
\hline
 	& $d = 7$
	& $d = 7$
	& $d = 7$
	\\
	$10^{-4}$
 	& 168 MBit/s 
	& 360 MBit/s 
	& 1.2 GBit/s
	\\
 	& 7 dec. units
 	& 15  dec. units
 	& 51 dec. units
	\\
 	& save $97\%$
	& save $99.3\%$
	& save $99.8\%$
	\\
\hline
 	& $d = 5$
	& $d = 5$
	& $d = 5$
	\\
	$10^{-5}$
 	& 24 MBit/s
	& 36 MBit/s
	& 60 MBit/s
	\\
 	& 2 dec. units
	& 3 dec. units 
	& 5 dec. units
	\\
 	& save $99\%$
	& save $99.9\%$
	& save $99.98\%$
	\\
\hline
\end{tabular}
\end{footnotesize}
\end{center}
\end{table}

\section{The lazy decoder as a decoder accelerator}

The lazy decoder can be considered as a (hardware or software) decoder accelerator.
It speeds up any decoding algorithm without significantly degrading the 
correction capability.
Fig.~\ref{fig:lazy_accelerator} shows the average execution time for our implementation
in C of two standard decoding algorithms with and without lazy pre-decoding.
The speed-up reaches a factor 10x for the Union-Find (UF) decoder \cite{delfosse2017UF_decoder},
which is already one of the fastest decoding algorithms and we obtain a 50x acceleration of the 
Minimum Weight Perfect Matching (MWPM) decoder \cite{dennis2002tqm}.
Note that both combinations Lazy + UF and Lazy + MWPM achieve a similar average runtime,
although the worst-case execution time, which is a central parameter in the design of a 
decoding architecture \cite{das2020:UF_architecture}, is substantially larger
for the MWPM. 

We also confirmed numerically that the lazy decoder does not deteriorate 
the performance of the MWPM decoder and the UF decoder 
as Theorem~\ref{theo:lazy_dec_optimality} suggests. 
On the contrary, the lazy decoder provides a slight improvement of the correction capacity 
of the UF decoder. This is because these two algorithms perform well on different 
types of fault configurations. 
The work of Seth et al. \cite{sheth2019:decoder_combination} explores further the
idea of combining different decoding strategies.

\begin{figure}
\centering
\includegraphics[scale=.5]{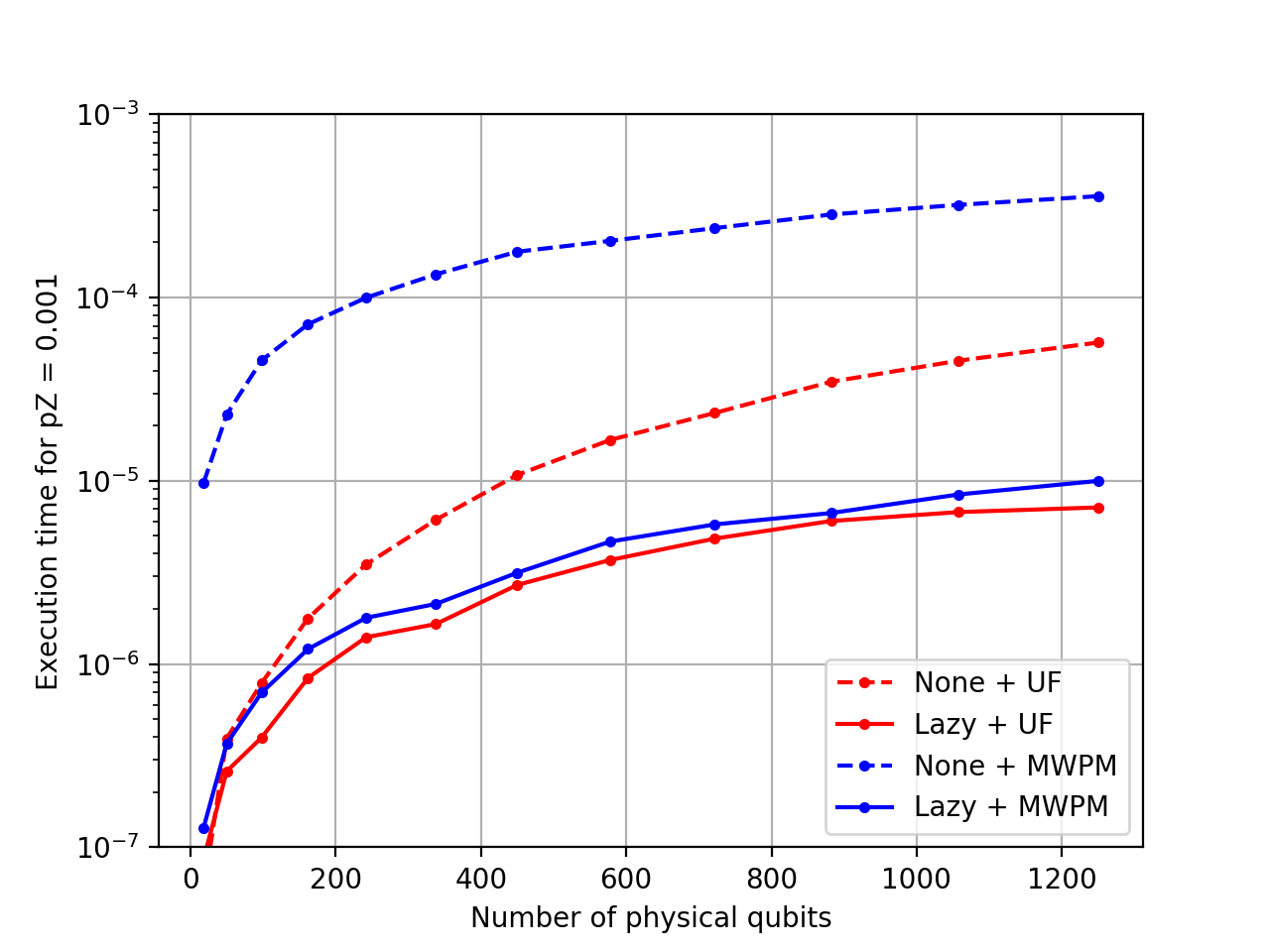}
\caption{
Execution time in seconds of the MWPM decoder and the UF decoder 
with and without lazy decoder. The runtime is estimated over $10^6$ trials,
over the 2D toric code, assuming perfect measurements and an error rate of $10^{-3}$.
We use an implementation in C of these algorithms executed on a MacBook Pro 2013
with a single thread processor 2,4 GHz Intel Core i5.
We observe a 10x speed-up of the UF decoder and 50x for the MWPM decoder.
}
\label{fig:lazy_accelerator}
\end{figure}

\medskip
{\em Conclusion --}
Error correction is a major bottleneck in fault-tolerant quantum computing
which leads to a huge overhead in the implementation of quantum algorithms 
\cite{reiher2017:nitrogen, campbell2019:CSP, gidney2019RSA}.
In this article, we designed a simple decoder that can be used in combination with a
more complex decoding unit to correct errors simultaneously on many logical qubits 
with a minimum decoding hardware. 

In future work, we plan to explore the impact of serialization latency on the decoder's 
performance. 

Although this work focuses on surface codes, the general principle 
of this design can be adapted to any quantum error correction code
if we can identify a set of easy error configuration that is likely enough.

The lazy decoder applies to any type of surface code, including codes defined on non-trivial
topology \cite{zemor2009:hom_codes, breuckmann2017hyperbolic}.
The lazy decoder can be directly applied to color codes \cite{bombin2006color_codes}
using for instance the projection decoder \cite{delfosse2014projection, kubica2019restriction}.

Beyond topological codes, one can adapt the lazy decoder to quantum LDPC 
codes 
\cite{mackay2004:QLDPC, tillich2013QLDPC, gottesman2014QLDPC, fawzi2018QLDPC, campbell2019:single_shot, breuckmann2020:LDPC_single_shot}.
The sparsity of the Tanner graph guarantees the success of our local strategy
for low enough physical error rate.

The basic idea of using a pre-decoder dedicated to the correction of simple configurations
is also central in the design of a flash memory controller where a hard-decision belief 
propagation (BP) decoder is used as a pre-decoder 
and, in case of failure, multiple levels of soft-decision BP are performed \cite{cai2017:flash_decoding}.
However, the noise rate of flash cells is far more favorable than in quantum hardware,
allowing for using a single decoding unit to correct many encoded blocks in flash memory.
Note that the execution time current flash BP decoders are far too long for the quantum setting
if we suppose that the decoding must be implemented in $d \mu s$.
($80 \mu s$ for hard decision decoder + $80 \mu s$ per level of soft-BP) \cite{cai2017:flash_decoding}.

The BP decoder provides a hierarchy of decoding algorithms 
with growing complexity as a function of the number of propagation levels. 
This flexibility allows for adjusting the number of levels in order 
to maximize the success probability of the decoder according the decoding available time.
The Union-Find decoder \cite{delfosse2017UF_decoder} offers the same advantage 
by tuning the number of growth rounds.

In the future, it would be interesting to explore further the hardware implementation 
of the Lazy decoder following the approach of \cite{das2020:UF_architecture} 
and to fabricate an FPGA or ASIC prototype in order to obtain a better insight 
on practical applications of the lazy decoder.

\medskip
{\em Acknowledgement -- The author would like to thank Krysta Svore, Michael Beverland, Jeongwan Haah, Adam Paetznick, Chris Pattison, Poulami Das, Alan Geller, Matthias Troyer, Helmut Katzgraber and Dave Wecker for insightful discussions and Poulami Das for her comments on a preliminary version of this article.
}

%\bibliographystyle{apsrev}
%\bibliography{bib_lazy}

\begin{thebibliography}{65}
\expandafter\ifx\csname natexlab\endcsname\relax\def\natexlab#1{#1}\fi
\expandafter\ifx\csname bibnamefont\endcsname\relax
  \def\bibnamefont#1{#1}\fi
\expandafter\ifx\csname bibfnamefont\endcsname\relax
  \def\bibfnamefont#1{#1}\fi
\expandafter\ifx\csname citenamefont\endcsname\relax
  \def\citenamefont#1{#1}\fi
\expandafter\ifx\csname url\endcsname\relax
  \def\url#1{\texttt{#1}}\fi
\expandafter\ifx\csname urlprefix\endcsname\relax\def\urlprefix{URL }\fi
\providecommand{\bibinfo}[2]{#2}
\providecommand{\eprint}[2][]{\url{#2}}

\bibitem[{\citenamefont{Dennis et~al.}(2002)\citenamefont{Dennis, Kitaev,
  Landahl, and Preskill}}]{dennis2002tqm}
\bibinfo{author}{\bibfnamefont{E.}~\bibnamefont{Dennis}},
  \bibinfo{author}{\bibfnamefont{A.}~\bibnamefont{Kitaev}},
  \bibinfo{author}{\bibfnamefont{A.}~\bibnamefont{Landahl}}, \bibnamefont{and}
  \bibinfo{author}{\bibfnamefont{J.}~\bibnamefont{Preskill}},
  \bibinfo{journal}{Journal of Mathematical Physics}
  \textbf{\bibinfo{volume}{43}}, \bibinfo{pages}{4452} (\bibinfo{year}{2002}).

\bibitem[{\citenamefont{Fowler et~al.}(2012{\natexlab{a}})\citenamefont{Fowler,
  Mariantoni, Martinis, and Cleland}}]{fowler2012surface_code}
\bibinfo{author}{\bibfnamefont{A.~G.} \bibnamefont{Fowler}},
  \bibinfo{author}{\bibfnamefont{M.}~\bibnamefont{Mariantoni}},
  \bibinfo{author}{\bibfnamefont{J.~M.} \bibnamefont{Martinis}},
  \bibnamefont{and} \bibinfo{author}{\bibfnamefont{A.~N.}
  \bibnamefont{Cleland}}, \bibinfo{journal}{Physical Review A}
  \textbf{\bibinfo{volume}{86}}, \bibinfo{pages}{032324}
  (\bibinfo{year}{2012}{\natexlab{a}}).

\bibitem[{\citenamefont{Raussendorf and Harrington}(2007)}]{raussenfort2007ft}
\bibinfo{author}{\bibfnamefont{R.}~\bibnamefont{Raussendorf}} \bibnamefont{and}
  \bibinfo{author}{\bibfnamefont{J.}~\bibnamefont{Harrington}},
  \bibinfo{journal}{Physical Review Letters} \textbf{\bibinfo{volume}{98}},
  \bibinfo{pages}{190504} (\bibinfo{year}{2007}).

\bibitem[{\citenamefont{Fowler et~al.}(2009)\citenamefont{Fowler, Stephens, and
  Groszkowski}}]{fowler2009highthreshold}
\bibinfo{author}{\bibfnamefont{A.~G.} \bibnamefont{Fowler}},
  \bibinfo{author}{\bibfnamefont{A.~M.} \bibnamefont{Stephens}},
  \bibnamefont{and}
  \bibinfo{author}{\bibfnamefont{P.}~\bibnamefont{Groszkowski}},
  \bibinfo{journal}{Physical Review A} \textbf{\bibinfo{volume}{80}},
  \bibinfo{pages}{052312} (\bibinfo{year}{2009}).

\bibitem[{\citenamefont{Gidney and Eker{\aa}}(2019)}]{gidney2019RSA}
\bibinfo{author}{\bibfnamefont{C.}~\bibnamefont{Gidney}} \bibnamefont{and}
  \bibinfo{author}{\bibfnamefont{M.}~\bibnamefont{Eker{\aa}}},
  \bibinfo{journal}{arXiv preprint arXiv:1905.09749}  (\bibinfo{year}{2019}).

\bibitem[{\citenamefont{Fowler et~al.}(2012{\natexlab{b}})\citenamefont{Fowler,
  Whiteside, McInnes, and Rabbani}}]{fowler2012autotune}
\bibinfo{author}{\bibfnamefont{A.~G.} \bibnamefont{Fowler}},
  \bibinfo{author}{\bibfnamefont{A.~C.} \bibnamefont{Whiteside}},
  \bibinfo{author}{\bibfnamefont{A.~L.} \bibnamefont{McInnes}},
  \bibnamefont{and} \bibinfo{author}{\bibfnamefont{A.}~\bibnamefont{Rabbani}},
  \bibinfo{journal}{Physical Review X} \textbf{\bibinfo{volume}{2}},
  \bibinfo{pages}{041003} (\bibinfo{year}{2012}{\natexlab{b}}).

\bibitem[{\citenamefont{Fowler et~al.}(2012{\natexlab{c}})\citenamefont{Fowler,
  Whiteside, and Hollenberg}}]{fowler2012MWPM}
\bibinfo{author}{\bibfnamefont{A.~G.} \bibnamefont{Fowler}},
  \bibinfo{author}{\bibfnamefont{A.~C.} \bibnamefont{Whiteside}},
  \bibnamefont{and} \bibinfo{author}{\bibfnamefont{L.~C.}
  \bibnamefont{Hollenberg}}, \bibinfo{journal}{Physical review letters}
  \textbf{\bibinfo{volume}{108}}, \bibinfo{pages}{180501}
  (\bibinfo{year}{2012}{\natexlab{c}}).

\bibitem[{\citenamefont{Fowler}(2015)}]{fowler2015parallel_MWPM}
\bibinfo{author}{\bibfnamefont{A.~G.} \bibnamefont{Fowler}},
  \bibinfo{journal}{Quantum Information and Computation}
  \textbf{\bibinfo{volume}{15}}, \bibinfo{pages}{0145} (\bibinfo{year}{2015}).

\bibitem[{\citenamefont{Duclos-Cianci and Poulin}(2010)}]{duclos2010RG_decoder}
\bibinfo{author}{\bibfnamefont{G.}~\bibnamefont{Duclos-Cianci}}
  \bibnamefont{and} \bibinfo{author}{\bibfnamefont{D.}~\bibnamefont{Poulin}},
  \bibinfo{journal}{Physical review letters} \textbf{\bibinfo{volume}{104}},
  \bibinfo{pages}{050504} (\bibinfo{year}{2010}).

\bibitem[{\citenamefont{Duclos-Cianci and
  Poulin}(2013)}]{duclos2013RG_improved}
\bibinfo{author}{\bibfnamefont{G.}~\bibnamefont{Duclos-Cianci}}
  \bibnamefont{and} \bibinfo{author}{\bibfnamefont{D.}~\bibnamefont{Poulin}},
  \bibinfo{journal}{arXiv preprint arXiv:1304.6100}  (\bibinfo{year}{2013}).

\bibitem[{\citenamefont{Barrett and
  Stace}(2010)}]{barrett2010loss_correction_short}
\bibinfo{author}{\bibfnamefont{S.~D.} \bibnamefont{Barrett}} \bibnamefont{and}
  \bibinfo{author}{\bibfnamefont{T.~M.} \bibnamefont{Stace}},
  \bibinfo{journal}{Physical review letters} \textbf{\bibinfo{volume}{105}},
  \bibinfo{pages}{200502} (\bibinfo{year}{2010}).

\bibitem[{\citenamefont{Stace and
  Barrett}(2010)}]{stace2010loss_correction_long}
\bibinfo{author}{\bibfnamefont{T.~M.} \bibnamefont{Stace}} \bibnamefont{and}
  \bibinfo{author}{\bibfnamefont{S.~D.} \bibnamefont{Barrett}},
  \bibinfo{journal}{Physical Review A} \textbf{\bibinfo{volume}{81}},
  \bibinfo{pages}{022317} (\bibinfo{year}{2010}).

\bibitem[{\citenamefont{Delfosse and
  Z{\'e}mor}(2017)}]{delfosse2017peeling_decoder}
\bibinfo{author}{\bibfnamefont{N.}~\bibnamefont{Delfosse}} \bibnamefont{and}
  \bibinfo{author}{\bibfnamefont{G.}~\bibnamefont{Z{\'e}mor}},
  \bibinfo{journal}{arXiv preprint arXiv:1703.01517}  (\bibinfo{year}{2017}).

\bibitem[{\citenamefont{Fowler}(2013)}]{fowler2013XZ_correlations}
\bibinfo{author}{\bibfnamefont{A.~G.} \bibnamefont{Fowler}},
  \bibinfo{journal}{arXiv preprint arXiv:1310.0863}  (\bibinfo{year}{2013}).

\bibitem[{\citenamefont{Delfosse and
  Tillich}(2014)}]{delfosse2014XZ_correlations}
\bibinfo{author}{\bibfnamefont{N.}~\bibnamefont{Delfosse}} \bibnamefont{and}
  \bibinfo{author}{\bibfnamefont{J.-P.} \bibnamefont{Tillich}}, in
  \emph{\bibinfo{booktitle}{2014 IEEE International Symposium on Information
  Theory}} (\bibinfo{organization}{IEEE}, \bibinfo{year}{2014}), pp.
  \bibinfo{pages}{1071--1075}.

\bibitem[{\citenamefont{Criger and
  Ashraf}(2018)}]{criger2018XZ_correlations_and_degen}
\bibinfo{author}{\bibfnamefont{B.}~\bibnamefont{Criger}} \bibnamefont{and}
  \bibinfo{author}{\bibfnamefont{I.}~\bibnamefont{Ashraf}},
  \bibinfo{journal}{Quantum} \textbf{\bibinfo{volume}{2}}, \bibinfo{pages}{102}
  (\bibinfo{year}{2018}).

\bibitem[{\citenamefont{Nickerson and
  Brown}(2019)}]{nickerson2019correlation_adaptive_decoder}
\bibinfo{author}{\bibfnamefont{N.~H.} \bibnamefont{Nickerson}}
  \bibnamefont{and} \bibinfo{author}{\bibfnamefont{B.~J.} \bibnamefont{Brown}},
  \bibinfo{journal}{Quantum} \textbf{\bibinfo{volume}{3}}, \bibinfo{pages}{131}
  (\bibinfo{year}{2019}).

\bibitem[{\citenamefont{Dennis}(2005)}]{dennis2005phd}
\bibinfo{author}{\bibfnamefont{E.}~\bibnamefont{Dennis}},
  \bibinfo{journal}{arXiv preprint quant-ph/0503169}  (\bibinfo{year}{2005}).

\bibitem[{\citenamefont{Harrington}(2004)}]{harrington2004phd}
\bibinfo{author}{\bibfnamefont{J.~W.} \bibnamefont{Harrington}}, Ph.D. thesis,
  \bibinfo{school}{California Institute of Technology} (\bibinfo{year}{2004}).

\bibitem[{\citenamefont{Wootton and
  Loss}(2012)}]{wootton2012high_threshold_decoder}
\bibinfo{author}{\bibfnamefont{J.~R.} \bibnamefont{Wootton}} \bibnamefont{and}
  \bibinfo{author}{\bibfnamefont{D.}~\bibnamefont{Loss}},
  \bibinfo{journal}{Physical review letters} \textbf{\bibinfo{volume}{109}},
  \bibinfo{pages}{160503} (\bibinfo{year}{2012}).

\bibitem[{\citenamefont{Hutter et~al.}(2014)\citenamefont{Hutter, Wootton, and
  Loss}}]{hutter2014MCMC_decoder}
\bibinfo{author}{\bibfnamefont{A.}~\bibnamefont{Hutter}},
  \bibinfo{author}{\bibfnamefont{J.~R.} \bibnamefont{Wootton}},
  \bibnamefont{and} \bibinfo{author}{\bibfnamefont{D.}~\bibnamefont{Loss}},
  \bibinfo{journal}{Physical Review A} \textbf{\bibinfo{volume}{89}},
  \bibinfo{pages}{022326} (\bibinfo{year}{2014}).

\bibitem[{\citenamefont{Wootton}(2015)}]{wootton2015simple_decoder}
\bibinfo{author}{\bibfnamefont{J.}~\bibnamefont{Wootton}},
  \bibinfo{journal}{Entropy} \textbf{\bibinfo{volume}{17}},
  \bibinfo{pages}{1946} (\bibinfo{year}{2015}).

\bibitem[{\citenamefont{Herold et~al.}(2015)\citenamefont{Herold, Kastoryano,
  Campbell, and Eisert}}]{herold2015FT_SC_decoder}
\bibinfo{author}{\bibfnamefont{M.}~\bibnamefont{Herold}},
  \bibinfo{author}{\bibfnamefont{M.~J.} \bibnamefont{Kastoryano}},
  \bibinfo{author}{\bibfnamefont{E.~T.} \bibnamefont{Campbell}},
  \bibnamefont{and} \bibinfo{author}{\bibfnamefont{J.}~\bibnamefont{Eisert}},
  \bibinfo{journal}{arXiv preprint arXiv:1511.05579}  (\bibinfo{year}{2015}).

\bibitem[{\citenamefont{Breuckmann et~al.}(2016)\citenamefont{Breuckmann,
  Duivenvoorden, Michels, and Terhal}}]{breuckmann2016local_decoder_2d_4d}
\bibinfo{author}{\bibfnamefont{N.~P.} \bibnamefont{Breuckmann}},
  \bibinfo{author}{\bibfnamefont{K.}~\bibnamefont{Duivenvoorden}},
  \bibinfo{author}{\bibfnamefont{D.}~\bibnamefont{Michels}}, \bibnamefont{and}
  \bibinfo{author}{\bibfnamefont{B.~M.} \bibnamefont{Terhal}},
  \bibinfo{journal}{arXiv preprint arXiv:1609.00510}  (\bibinfo{year}{2016}).

\bibitem[{\citenamefont{Delfosse and Nickerson}(2017)}]{delfosse2017UF_decoder}
\bibinfo{author}{\bibfnamefont{N.}~\bibnamefont{Delfosse}} \bibnamefont{and}
  \bibinfo{author}{\bibfnamefont{N.~H.} \bibnamefont{Nickerson}},
  \bibinfo{journal}{arXiv preprint arXiv:1709.06218}  (\bibinfo{year}{2017}).

\bibitem[{\citenamefont{Tomita and Svore}(2014)}]{tomita2014LUT_decoder}
\bibinfo{author}{\bibfnamefont{Y.}~\bibnamefont{Tomita}} \bibnamefont{and}
  \bibinfo{author}{\bibfnamefont{K.~M.} \bibnamefont{Svore}},
  \bibinfo{journal}{Physical Review A} \textbf{\bibinfo{volume}{90}},
  \bibinfo{pages}{062320} (\bibinfo{year}{2014}).

\bibitem[{\citenamefont{Heim et~al.}(2016)\citenamefont{Heim, Svore, and
  Hastings}}]{heim2016optimal_decoder}
\bibinfo{author}{\bibfnamefont{B.}~\bibnamefont{Heim}},
  \bibinfo{author}{\bibfnamefont{K.~M.} \bibnamefont{Svore}}, \bibnamefont{and}
  \bibinfo{author}{\bibfnamefont{M.~B.} \bibnamefont{Hastings}},
  \bibinfo{journal}{arXiv preprint arXiv:1609.06373}  (\bibinfo{year}{2016}).

\bibitem[{\citenamefont{Torlai and Melko}(2017)}]{torlai2017NN_SC_decoder}
\bibinfo{author}{\bibfnamefont{G.}~\bibnamefont{Torlai}} \bibnamefont{and}
  \bibinfo{author}{\bibfnamefont{R.~G.} \bibnamefont{Melko}},
  \bibinfo{journal}{Physical review letters} \textbf{\bibinfo{volume}{119}},
  \bibinfo{pages}{030501} (\bibinfo{year}{2017}).

\bibitem[{\citenamefont{Baireuther et~al.}(2018)\citenamefont{Baireuther,
  O'Brien, Tarasinski, and Beenakker}}]{baireuther2018NN_SC_correlations}
\bibinfo{author}{\bibfnamefont{P.}~\bibnamefont{Baireuther}},
  \bibinfo{author}{\bibfnamefont{T.~E.} \bibnamefont{O'Brien}},
  \bibinfo{author}{\bibfnamefont{B.}~\bibnamefont{Tarasinski}},
  \bibnamefont{and} \bibinfo{author}{\bibfnamefont{C.~W.}
  \bibnamefont{Beenakker}}, \bibinfo{journal}{Quantum}
  \textbf{\bibinfo{volume}{2}}, \bibinfo{pages}{48} (\bibinfo{year}{2018}).

\bibitem[{\citenamefont{Krastanov and
  Jiang}(2017)}]{krastanov2017DNN_SC_decoder}
\bibinfo{author}{\bibfnamefont{S.}~\bibnamefont{Krastanov}} \bibnamefont{and}
  \bibinfo{author}{\bibfnamefont{L.}~\bibnamefont{Jiang}},
  \bibinfo{journal}{Scientific reports} \textbf{\bibinfo{volume}{7}},
  \bibinfo{pages}{11003} (\bibinfo{year}{2017}).

\bibitem[{\citenamefont{Varsamopoulos et~al.}(2017)\citenamefont{Varsamopoulos,
  Criger, and Bertels}}]{varsamopoulos2017NN_SC_decoding}
\bibinfo{author}{\bibfnamefont{S.}~\bibnamefont{Varsamopoulos}},
  \bibinfo{author}{\bibfnamefont{B.}~\bibnamefont{Criger}}, \bibnamefont{and}
  \bibinfo{author}{\bibfnamefont{K.}~\bibnamefont{Bertels}},
  \bibinfo{journal}{Quantum Science and Technology}
  \textbf{\bibinfo{volume}{3}}, \bibinfo{pages}{015004} (\bibinfo{year}{2017}).

\bibitem[{\citenamefont{Chamberland and
  Ronagh}(2018)}]{chamberland2018DNN_small_codes}
\bibinfo{author}{\bibfnamefont{C.}~\bibnamefont{Chamberland}} \bibnamefont{and}
  \bibinfo{author}{\bibfnamefont{P.}~\bibnamefont{Ronagh}},
  \bibinfo{journal}{Quantum Science and Technology}
  \textbf{\bibinfo{volume}{3}}, \bibinfo{pages}{044002} (\bibinfo{year}{2018}).

\bibitem[{\citenamefont{Breuckmann and
  Ni}(2018)}]{breuckmann2018NN_high_dimension_codes}
\bibinfo{author}{\bibfnamefont{N.~P.} \bibnamefont{Breuckmann}}
  \bibnamefont{and} \bibinfo{author}{\bibfnamefont{X.}~\bibnamefont{Ni}},
  \bibinfo{journal}{Quantum} \textbf{\bibinfo{volume}{2}}, \bibinfo{pages}{68}
  (\bibinfo{year}{2018}).

\bibitem[{\citenamefont{Sweke et~al.}(2018)\citenamefont{Sweke, Kesselring, van
  Nieuwenburg, and Eisert}}]{sweke2018RL_SC_decoder}
\bibinfo{author}{\bibfnamefont{R.}~\bibnamefont{Sweke}},
  \bibinfo{author}{\bibfnamefont{M.~S.} \bibnamefont{Kesselring}},
  \bibinfo{author}{\bibfnamefont{E.~P.} \bibnamefont{van Nieuwenburg}},
  \bibnamefont{and} \bibinfo{author}{\bibfnamefont{J.}~\bibnamefont{Eisert}},
  \bibinfo{journal}{arXiv preprint arXiv:1810.07207}  (\bibinfo{year}{2018}).

\bibitem[{\citenamefont{Baireuther et~al.}(2019)\citenamefont{Baireuther, Caio,
  Criger, Beenakker, and O’Brien}}]{baireuther2019NN_CC_circuit_decoder}
\bibinfo{author}{\bibfnamefont{P.}~\bibnamefont{Baireuther}},
  \bibinfo{author}{\bibfnamefont{M.}~\bibnamefont{Caio}},
  \bibinfo{author}{\bibfnamefont{B.}~\bibnamefont{Criger}},
  \bibinfo{author}{\bibfnamefont{C.~W.} \bibnamefont{Beenakker}},
  \bibnamefont{and} \bibinfo{author}{\bibfnamefont{T.~E.}
  \bibnamefont{O’Brien}}, \bibinfo{journal}{New Journal of Physics}
  \textbf{\bibinfo{volume}{21}}, \bibinfo{pages}{013003}
  (\bibinfo{year}{2019}).

\bibitem[{\citenamefont{Ni}(2018)}]{ni2018NN_SC_large_d}
\bibinfo{author}{\bibfnamefont{X.}~\bibnamefont{Ni}}, \bibinfo{journal}{arXiv
  preprint arXiv:1809.06640}  (\bibinfo{year}{2018}).

\bibitem[{\citenamefont{Andreasson et~al.}(2019)\citenamefont{Andreasson,
  Johansson, Liljestrand, and Granath}}]{andreasson2019DNN_toric_code}
\bibinfo{author}{\bibfnamefont{P.}~\bibnamefont{Andreasson}},
  \bibinfo{author}{\bibfnamefont{J.}~\bibnamefont{Johansson}},
  \bibinfo{author}{\bibfnamefont{S.}~\bibnamefont{Liljestrand}},
  \bibnamefont{and} \bibinfo{author}{\bibfnamefont{M.}~\bibnamefont{Granath}},
  \bibinfo{journal}{Quantum} \textbf{\bibinfo{volume}{3}}, \bibinfo{pages}{183}
  (\bibinfo{year}{2019}).

\bibitem[{\citenamefont{Davaasuren et~al.}(2018)\citenamefont{Davaasuren,
  Suzuki, Fujii, and Koashi}}]{davaasuren2018CNN_SC_decoder}
\bibinfo{author}{\bibfnamefont{A.}~\bibnamefont{Davaasuren}},
  \bibinfo{author}{\bibfnamefont{Y.}~\bibnamefont{Suzuki}},
  \bibinfo{author}{\bibfnamefont{K.}~\bibnamefont{Fujii}}, \bibnamefont{and}
  \bibinfo{author}{\bibfnamefont{M.}~\bibnamefont{Koashi}},
  \bibinfo{journal}{arXiv preprint arXiv:1801.04377}  (\bibinfo{year}{2018}).

\bibitem[{\citenamefont{Liu and Poulin}(2019)}]{liu2019NN_LDPC_codes}
\bibinfo{author}{\bibfnamefont{Y.-H.} \bibnamefont{Liu}} \bibnamefont{and}
  \bibinfo{author}{\bibfnamefont{D.}~\bibnamefont{Poulin}},
  \bibinfo{journal}{Physical review letters} \textbf{\bibinfo{volume}{122}},
  \bibinfo{pages}{200501} (\bibinfo{year}{2019}).

\bibitem[{\citenamefont{Varsamopoulos et~al.}(2018)\citenamefont{Varsamopoulos,
  Bertels, and Almudever}}]{varsamopoulos2018NN_SC_decoder_design}
\bibinfo{author}{\bibfnamefont{S.}~\bibnamefont{Varsamopoulos}},
  \bibinfo{author}{\bibfnamefont{K.}~\bibnamefont{Bertels}}, \bibnamefont{and}
  \bibinfo{author}{\bibfnamefont{C.~G.} \bibnamefont{Almudever}},
  \bibinfo{journal}{arXiv preprint arXiv:1811.12456}  (\bibinfo{year}{2018}).

\bibitem[{\citenamefont{Varsamopoulos et~al.}(2019)\citenamefont{Varsamopoulos,
  Bertels, and Almudever}}]{varsamopoulos2019NN_SC_decoder_distributed}
\bibinfo{author}{\bibfnamefont{S.}~\bibnamefont{Varsamopoulos}},
  \bibinfo{author}{\bibfnamefont{K.}~\bibnamefont{Bertels}}, \bibnamefont{and}
  \bibinfo{author}{\bibfnamefont{C.~G.} \bibnamefont{Almudever}},
  \bibinfo{journal}{arXiv preprint arXiv:1901.10847}  (\bibinfo{year}{2019}).

\bibitem[{\citenamefont{Wagner et~al.}(2019)\citenamefont{Wagner, Kampermann,
  and Bru{\ss}}}]{wagner2019NN_training_with_symmetry}
\bibinfo{author}{\bibfnamefont{T.}~\bibnamefont{Wagner}},
  \bibinfo{author}{\bibfnamefont{H.}~\bibnamefont{Kampermann}},
  \bibnamefont{and} \bibinfo{author}{\bibfnamefont{D.}~\bibnamefont{Bru{\ss}}},
  \bibinfo{journal}{arXiv preprint arXiv:1910.01662}  (\bibinfo{year}{2019}).

\bibitem[{\citenamefont{Chinni et~al.}(2019)\citenamefont{Chinni, Kulkarni, and
  Pai}}]{chinni2019NN_CC_decoder}
\bibinfo{author}{\bibfnamefont{C.}~\bibnamefont{Chinni}},
  \bibinfo{author}{\bibfnamefont{A.}~\bibnamefont{Kulkarni}}, \bibnamefont{and}
  \bibinfo{author}{\bibfnamefont{D.~M.} \bibnamefont{Pai}},
  \bibinfo{journal}{arXiv preprint arXiv:1901.07535}  (\bibinfo{year}{2019}).

\bibitem[{\citenamefont{Sheth et~al.}(2019{\natexlab{a}})\citenamefont{Sheth,
  Jafarzadeh, and Gheorghiu}}]{sheth2019NN_SC_combining_decoder}
\bibinfo{author}{\bibfnamefont{M.}~\bibnamefont{Sheth}},
  \bibinfo{author}{\bibfnamefont{S.~Z.} \bibnamefont{Jafarzadeh}},
  \bibnamefont{and}
  \bibinfo{author}{\bibfnamefont{V.}~\bibnamefont{Gheorghiu}},
  \bibinfo{journal}{arXiv preprint arXiv:1905.02345}
  (\bibinfo{year}{2019}{\natexlab{a}}).

\bibitem[{\citenamefont{Colomer et~al.}(2019)\citenamefont{Colomer,
  Skotiniotis, and Muñoz-Tapia}}]{colomer2019RL_TC_decoder}
\bibinfo{author}{\bibfnamefont{L.~D.} \bibnamefont{Colomer}},
  \bibinfo{author}{\bibfnamefont{M.}~\bibnamefont{Skotiniotis}},
  \bibnamefont{and}
  \bibinfo{author}{\bibfnamefont{R.}~\bibnamefont{Muñoz-Tapia}},
  \bibinfo{journal}{arXiv preprint arXiv:1911.02308}  (\bibinfo{year}{2019}).

\bibitem[{\citenamefont{Ferris and Poulin}(2014)}]{ferris2014TN_QEC}
\bibinfo{author}{\bibfnamefont{A.~J.} \bibnamefont{Ferris}} \bibnamefont{and}
  \bibinfo{author}{\bibfnamefont{D.}~\bibnamefont{Poulin}},
  \bibinfo{journal}{Physical review letters} \textbf{\bibinfo{volume}{113}},
  \bibinfo{pages}{030501} (\bibinfo{year}{2014}).

\bibitem[{\citenamefont{Bravyi et~al.}(2014)\citenamefont{Bravyi, Suchara, and
  Vargo}}]{bravyi2014MLD}
\bibinfo{author}{\bibfnamefont{S.}~\bibnamefont{Bravyi}},
  \bibinfo{author}{\bibfnamefont{M.}~\bibnamefont{Suchara}}, \bibnamefont{and}
  \bibinfo{author}{\bibfnamefont{A.}~\bibnamefont{Vargo}},
  \bibinfo{journal}{Physical Review A} \textbf{\bibinfo{volume}{90}},
  \bibinfo{pages}{032326} (\bibinfo{year}{2014}).

\bibitem[{\citenamefont{Darmawan and
  Poulin}(2018)}]{darmawan2018Lineartime_TN_decoding}
\bibinfo{author}{\bibfnamefont{A.~S.} \bibnamefont{Darmawan}} \bibnamefont{and}
  \bibinfo{author}{\bibfnamefont{D.}~\bibnamefont{Poulin}},
  \bibinfo{journal}{Physical Review E} \textbf{\bibinfo{volume}{97}},
  \bibinfo{pages}{051302} (\bibinfo{year}{2018}).

\bibitem[{\citenamefont{Tuckett et~al.}(2018)\citenamefont{Tuckett, Chubb,
  Bravyi, Bartlett, and Flammia}}]{tuckett2018TN_biased}
\bibinfo{author}{\bibfnamefont{D.~K.} \bibnamefont{Tuckett}},
  \bibinfo{author}{\bibfnamefont{C.~T.} \bibnamefont{Chubb}},
  \bibinfo{author}{\bibfnamefont{S.}~\bibnamefont{Bravyi}},
  \bibinfo{author}{\bibfnamefont{S.~D.} \bibnamefont{Bartlett}},
  \bibnamefont{and} \bibinfo{author}{\bibfnamefont{S.~T.}
  \bibnamefont{Flammia}}, \bibinfo{journal}{arXiv preprint arXiv:1812.08186}
  (\bibinfo{year}{2018}).

\bibitem[{\citenamefont{Das et~al.}(2020)\citenamefont{Das, Pattison, Manne,
  Carmean, Svore, Qureshi, and Delfosse}}]{das2020:UF_architecture}
\bibinfo{author}{\bibfnamefont{P.}~\bibnamefont{Das}},
  \bibinfo{author}{\bibfnamefont{C.~A.} \bibnamefont{Pattison}},
  \bibinfo{author}{\bibfnamefont{S.}~\bibnamefont{Manne}},
  \bibinfo{author}{\bibfnamefont{D.}~\bibnamefont{Carmean}},
  \bibinfo{author}{\bibfnamefont{K.}~\bibnamefont{Svore}},
  \bibinfo{author}{\bibfnamefont{M.}~\bibnamefont{Qureshi}}, \bibnamefont{and}
  \bibinfo{author}{\bibfnamefont{N.}~\bibnamefont{Delfosse}},
  \bibinfo{journal}{arXiv preprint arXiv:2001.06598}  (\bibinfo{year}{2020}).

\bibitem[{\citenamefont{Sheth et~al.}(2019{\natexlab{b}})\citenamefont{Sheth,
  Jafarzadeh, and Gheorghiu}}]{sheth2019:decoder_combination}
\bibinfo{author}{\bibfnamefont{M.}~\bibnamefont{Sheth}},
  \bibinfo{author}{\bibfnamefont{S.~Z.} \bibnamefont{Jafarzadeh}},
  \bibnamefont{and}
  \bibinfo{author}{\bibfnamefont{V.}~\bibnamefont{Gheorghiu}},
  \bibinfo{journal}{arXiv preprint arXiv:1905.02345}
  (\bibinfo{year}{2019}{\natexlab{b}}).

\bibitem[{\citenamefont{Z{\'e}mor}(2009)}]{zemor2009:hom_codes}
\bibinfo{author}{\bibfnamefont{G.}~\bibnamefont{Z{\'e}mor}}, in
  \emph{\bibinfo{booktitle}{International Conference on Coding and Cryptology}}
  (\bibinfo{organization}{Springer}, \bibinfo{year}{2009}), pp.
  \bibinfo{pages}{259--273}.

\bibitem[{\citenamefont{Breuckmann et~al.}(2017)\citenamefont{Breuckmann,
  Vuillot, Campbell, Krishna, and Terhal}}]{breuckmann2017hyperbolic}
\bibinfo{author}{\bibfnamefont{N.~P.} \bibnamefont{Breuckmann}},
  \bibinfo{author}{\bibfnamefont{C.}~\bibnamefont{Vuillot}},
  \bibinfo{author}{\bibfnamefont{E.}~\bibnamefont{Campbell}},
  \bibinfo{author}{\bibfnamefont{A.}~\bibnamefont{Krishna}}, \bibnamefont{and}
  \bibinfo{author}{\bibfnamefont{B.~M.} \bibnamefont{Terhal}},
  \bibinfo{journal}{Quantum Science and Technology}
  \textbf{\bibinfo{volume}{2}}, \bibinfo{pages}{035007} (\bibinfo{year}{2017}).

\bibitem[{\citenamefont{Bombin and
  Martin-Delgado}(2006)}]{bombin2006color_codes}
\bibinfo{author}{\bibfnamefont{H.}~\bibnamefont{Bombin}} \bibnamefont{and}
  \bibinfo{author}{\bibfnamefont{M.~A.} \bibnamefont{Martin-Delgado}},
  \bibinfo{journal}{Physical review letters} \textbf{\bibinfo{volume}{97}},
  \bibinfo{pages}{180501} (\bibinfo{year}{2006}).

\bibitem[{\citenamefont{Delfosse}(2014)}]{delfosse2014projection}
\bibinfo{author}{\bibfnamefont{N.}~\bibnamefont{Delfosse}},
  \bibinfo{journal}{Physical Review A} \textbf{\bibinfo{volume}{89}},
  \bibinfo{pages}{012317} (\bibinfo{year}{2014}).

\bibitem[{\citenamefont{Kubica and Delfosse}(2019)}]{kubica2019restriction}
\bibinfo{author}{\bibfnamefont{A.}~\bibnamefont{Kubica}} \bibnamefont{and}
  \bibinfo{author}{\bibfnamefont{N.}~\bibnamefont{Delfosse}},
  \bibinfo{journal}{arXiv preprint arXiv:1905.07393}  (\bibinfo{year}{2019}).

\bibitem[{\citenamefont{MacKay et~al.}(2004)\citenamefont{MacKay, Mitchison,
  and McFadden}}]{mackay2004:QLDPC}
\bibinfo{author}{\bibfnamefont{D.~J.~C.} \bibnamefont{MacKay}},
  \bibinfo{author}{\bibfnamefont{G.}~\bibnamefont{Mitchison}},
  \bibnamefont{and} \bibinfo{author}{\bibfnamefont{P.~L.}
  \bibnamefont{McFadden}}, \bibinfo{journal}{IEEE Transaction on Information
  Theory} \textbf{\bibinfo{volume}{50}}, \bibinfo{pages}{2315}
  (\bibinfo{year}{2004}).

\bibitem[{\citenamefont{Tillich and Z{\'e}mor}(2013)}]{tillich2013QLDPC}
\bibinfo{author}{\bibfnamefont{J.-P.} \bibnamefont{Tillich}} \bibnamefont{and}
  \bibinfo{author}{\bibfnamefont{G.}~\bibnamefont{Z{\'e}mor}},
  \bibinfo{journal}{IEEE Transactions on Information Theory}
  \textbf{\bibinfo{volume}{60}}, \bibinfo{pages}{1193} (\bibinfo{year}{2013}).

\bibitem[{\citenamefont{Gottesman}(2014)}]{gottesman2014QLDPC}
\bibinfo{author}{\bibfnamefont{D.}~\bibnamefont{Gottesman}},
  \bibinfo{journal}{Quantum Information \& Computation}
  \textbf{\bibinfo{volume}{14}}, \bibinfo{pages}{1338} (\bibinfo{year}{2014}).

\bibitem[{\citenamefont{Fawzi et~al.}(2018)\citenamefont{Fawzi, Grospellier,
  and Leverrier}}]{fawzi2018QLDPC}
\bibinfo{author}{\bibfnamefont{O.}~\bibnamefont{Fawzi}},
  \bibinfo{author}{\bibfnamefont{A.}~\bibnamefont{Grospellier}},
  \bibnamefont{and}
  \bibinfo{author}{\bibfnamefont{A.}~\bibnamefont{Leverrier}}, in
  \emph{\bibinfo{booktitle}{2018 IEEE 59th Annual Symposium on Foundations of
  Computer Science (FOCS)}} (\bibinfo{organization}{IEEE},
  \bibinfo{year}{2018}), pp. \bibinfo{pages}{743--754}.

\bibitem[{\citenamefont{Campbell}(2019)}]{campbell2019:single_shot}
\bibinfo{author}{\bibfnamefont{E.}~\bibnamefont{Campbell}},
  \bibinfo{journal}{Quantum Science and Technology}  (\bibinfo{year}{2019}).

\bibitem[{\citenamefont{Breuckmann and
  Londe}(2020)}]{breuckmann2020:LDPC_single_shot}
\bibinfo{author}{\bibfnamefont{N.~P.} \bibnamefont{Breuckmann}}
  \bibnamefont{and} \bibinfo{author}{\bibfnamefont{V.}~\bibnamefont{Londe}},
  \bibinfo{journal}{arXiv preprint arXiv:2001.03568}  (\bibinfo{year}{2020}).

\bibitem[{\citenamefont{Cai et~al.}(2017)\citenamefont{Cai, Ghose, Haratsch,
  Luo, and Mutlu}}]{cai2017:flash_decoding}
\bibinfo{author}{\bibfnamefont{Y.}~\bibnamefont{Cai}},
  \bibinfo{author}{\bibfnamefont{S.}~\bibnamefont{Ghose}},
  \bibinfo{author}{\bibfnamefont{E.~F.} \bibnamefont{Haratsch}},
  \bibinfo{author}{\bibfnamefont{Y.}~\bibnamefont{Luo}}, \bibnamefont{and}
  \bibinfo{author}{\bibfnamefont{O.}~\bibnamefont{Mutlu}},
  \bibinfo{journal}{arXiv preprint arXiv:1711.11427}  (\bibinfo{year}{2017}).

\bibitem[{\citenamefont{Reiher et~al.}(2017)\citenamefont{Reiher, Wiebe, Svore,
  Wecker, and Troyer}}]{reiher2017:nitrogen}
\bibinfo{author}{\bibfnamefont{M.}~\bibnamefont{Reiher}},
  \bibinfo{author}{\bibfnamefont{N.}~\bibnamefont{Wiebe}},
  \bibinfo{author}{\bibfnamefont{K.~M.} \bibnamefont{Svore}},
  \bibinfo{author}{\bibfnamefont{D.}~\bibnamefont{Wecker}}, \bibnamefont{and}
  \bibinfo{author}{\bibfnamefont{M.}~\bibnamefont{Troyer}},
  \bibinfo{journal}{Proceedings of the National Academy of Sciences}
  \textbf{\bibinfo{volume}{114}}, \bibinfo{pages}{7555} (\bibinfo{year}{2017}).

\bibitem[{\citenamefont{Campbell et~al.}(2019)\citenamefont{Campbell, Khurana,
  and Montanaro}}]{campbell2019:CSP}
\bibinfo{author}{\bibfnamefont{E.}~\bibnamefont{Campbell}},
  \bibinfo{author}{\bibfnamefont{A.}~\bibnamefont{Khurana}}, \bibnamefont{and}
  \bibinfo{author}{\bibfnamefont{A.}~\bibnamefont{Montanaro}},
  \bibinfo{journal}{Quantum} \textbf{\bibinfo{volume}{3}}, \bibinfo{pages}{167}
  (\bibinfo{year}{2019}).

\end{thebibliography}

\end{document}